\documentclass[pra,twocolumn,showpacs,superscriptaddress]{revtex4-1}

\usepackage{graphicx}
\usepackage{color}

\usepackage{amsthm,amsmath,amssymb}
\usepackage{bbold}
%
%

\usepackage[matrix,frame,arrow]{xy}
\usepackage{amsmath}
\newcommand{\bra}[1]{\left\langle{#1}\right\vert}
\newcommand{\ket}[1]{\left\vert{#1}\right\rangle}
\newcommand{\qw}[1][-1]{\ar @{-} [0,#1]}



\newcommand{\gate}[1]{*{\xy *+<.6em>{#1};p\save+LU;+RU **\dir{-}\restore\save+RU;+RD **\dir{-}\restore\save+RD;+LD **\dir{-}\restore\POS+LD;+LU **\dir{-}\endxy} \qw}



\newcommand{\measureD}[1]{*{\xy*+=+<.5em>{\vphantom{\rule{0em}{.1em}#1}}*\cir{r_l};p\save*!R{#1} \restore\save+UC;+UC-<.5em,0em>*!R{\hphantom{#1}}+L **\dir{-} \restore\save+DC;+DC-<.5em,0em>*!R{\hphantom{#1}}+L **\dir{-} \restore\POS+UC-<.5em,0em>*!R{\hphantom{#1}}+L;+DC-<.5em,0em>*!R{\hphantom{#1}}+L **\dir{-} \endxy} \qw}

\newcommand{\multimeasureD}[2]{*+<1em,.9em>{\hphantom{#2}}\save[0,0].[#1,0];p\save !C *{#2},p+LU+<0em,0em>;+RU+<-.8em,0em> **\dir{-}\restore\save +LD;+LU **\dir{-}\restore\save +LD;+RD-<.8em,0em> **\dir{-} \restore\save +RD+<0em,.8em>;+RU-<0em,.8em> **\dir{-} \restore \POS !UR*!UR{\cir<.9em>{r_d}};!DR*!DR{\cir<.9em>{d_l}}\restore \qw}






\newcommand{\multigate}[2]{*+<1em,.9em>{\hphantom{#2}} \qw \POS[0,0].[#1,0];p !C *{#2},p \save+LU;+RU **\dir{-}\restore\save+RU;+RD **\dir{-}\restore\save+RD;+LD **\dir{-}\restore\save+LD;+LU **\dir{-}\restore}
\newcommand{\ghost}[1]{*+<1em,.9em>{\hphantom{#1}} \qw}

\newcommand{\gategroup}[6]{\POS"#1,#2"."#3,#2"."#1,#4"."#3,#4"!C*+<#5>\frm{#6}}


\newcommand{\ustick}[1]{*!D!<0em,-.5em>=<0em>{#1}}

\newcommand{\Qcircuit}[1][0em]{\xymatrix @*=<#1>}


\newcommand{\pureghost}[1]{*+<1em,.9em>{\hphantom{#1}}}
\newcommand{\multiprepareC}[2]{*+<1em,.9em>{\hphantom{#2}}\save[0,0].[#1,0];p\save !C
  *{#2},p+RU+<0em,0em>;+LU+<+.8em,0em> **\dir{-}\restore\save +RD;+RU **\dir{-}\restore\save
  +RD;+LD+<.8em,0em> **\dir{-} \restore\save +LD+<0em,.8em>;+LU-<0em,.8em> **\dir{-} \restore \POS
  !UL*!UL{\cir<.9em>{u_r}};!DL*!DL{\cir<.9em>{l_u}}\restore}
\newcommand{\prepareC}[1]{*{\xy*+=+<.5em>{\vphantom{#1\rule{0em}{.1em}}}*\cir{l^r};p\save*!L{#1} \restore\save+UC;+UC+<.5em,0em>*!L{\hphantom{#1}}+R **\dir{-} \restore\save+DC;+DC+<.5em,0em>*!L{\hphantom{#1}}+R **\dir{-} \restore\POS+UC+<.5em,0em>*!L{\hphantom{#1}}+R;+DC+<.5em,0em>*!L{\hphantom{#1}}+R **\dir{-} \endxy}}

\newcommand{\ketbra}[2]{\ket{#1}\bra{#1}}

\newcommand{\Tr}{\operatorname{Tr}}

\newcommand{\floor}[1]{\lfloor#1\rfloor}
\newcommand{\ceil}[1]{\lceil#1\rceil}
 \newcommand{\hilb}[1]{\mathcal{#1}}

\newtheorem{thm}{Proposition}

\begin{document}

\title{Experimental implementation of unambiguous quantum reading}

\author{Michele \surname{Dall'Arno}}

\affiliation{ICFO-Institut de Ciencies Fotoniques, Mediterranean
  Technology Park, E-08860 Castelldefels (Barcelona), Spain}
\affiliation{Quit group, Dipartimento di Fisica ``A. Volta'', via Bassi 6,
  I-27100 Pavia, Italy}

\author{Alessandro \surname{Bisio}}
\author{Giacomo Mauro \surname{D'Ariano}}

\affiliation{Quit group, Dipartimento di Fisica ``A. Volta'', via Bassi 6,
  I-27100 Pavia, Italy}
\affiliation{Istituto Nazionale di Fisica Nucleare,
  Gruppo IV, via Bassi 6, I-27100 Pavia, Italy}

\author{Martina Mikov\'{a}}
\author{Miroslav Je\v{z}ek}
\author{Miloslav Du{\v{s}}ek}

\affiliation{Department of Optics, Faculty of Science, Palacky University,
  17.~listopadu 12, CZ-77146 Olomouc, Czech Republic}

\date{\today}

\begin{abstract}
  We provide the optimal strategy for unambiguous quantum reading of optical
  memories, namely when perfect retrieving of information is achieved
  probabilistically, for the case where noise and loss are negligible. We
  describe the experimental quantum optical implementations, and provide
  experimental results for the single photon case.
\end{abstract}

\maketitle

\section{Introduction}

In the engineering of optical memories (such as CDs or DVDs) and readers, a
tradeoff among several parameters must be taken into account. High precision
in the retrieving of information is surely an indefeasible assumption, but
also energy requirements, size and weight can play a very relevant role for
applications. Clearly, size and weight of the device increase with the energy,
and using a low energetic radiation to read information reduces the heating of
the physical bit, thus allowing for smaller implementation of the bit
itself. Moreover, many physical media (e.g., superconducting devices)
dramatically change their optical property if the energy flow overcomes a
critical threshold.

In the problem of quantum reading \cite{Pir11, BDD11, Nai11, PLGMB11, Hir11}
of optical devices one's task is to exploit the quantum properties of light in
order to retrieve some classical digital information stored in the optical
properties of a given media, making use of as few energy as possible. The
quantum reading of optical memories was first introduced in \cite{Pir11}. A
realistic model of digital memory was considered, where each cell is composed
of a beamsplitter with two possible reflectivities. A single optical input is
available to the reading device, while the other one introduces thermal noise
in the reading process, so that the problem considered is the discrimination
of two lossy and thermal Gaussian channels. It was shown that, for fixed mean
number of photons irradiated over each memory cell, even in the presence of
noise and loss, a quantum source of light can retrieve more information than
any classical source - in particular in the regime of few photons and high
reflectivities. This provided the first evidence that the use of quantum light
can provide great improvements in applications in the technology of digital
memories such as CDs or DVDs.

In practical implementations noise can sometimes be noticeably reduced
\cite{note:illumination}. On the other hand, in general loss inherently
affects quantum optical setups. Nevertheless, a theoretical analysis of the
ideal, i.e. lossless and noiseless, quantum reading provides a theoretical
insight of the problem and a meaningful benchmark for any experimental
realization.  In this hypothesis quantum reading of optical devices can be
recasted to a discrimination among optical devices with low energy and high
precision.

In the ideal reading of a classical bit of information from an optical memory,
namely in the discrimination of a quantum optical device from a set of two,
different scenarios can be distinguished. A possibility is the on-the-fly
retrieving of information (e.g. multimedia streaming), where the requirement
is that the reading operation is performed fast - namely, only once, but a
modest amount of errors in the retrieved information is tolerable. This
scenario corresponds to the problem of minimum energy ambiguous discrimination
of optical devices \cite{Aci01, DLP02, DFY07}, where one guesses the unknown
device and the task is to minimize the probability of making an error.

On the other hand, in a situation of criticality of errors and very reliable
technology, the perfect retrieving of information is an issue.  Then,
unambiguous discrimination of optical devices \cite{Sed09}, where one allows
for an inconclusive outcome (while, in case of conclusive outcome, the
probability of error is zero) becomes interesting.

In \cite{BDD11} an optimal strategy for the first scenario - namely, the
minimum energy ambiguous discrimination of optical devices - has been provided
for the ideal case. This strategy, that exploits fundamental properties of the
quantum theory such as entanglement, allows for the ambiguous discrimination
of beamsplitters with probability of error under any given threshold, while
minimizing the energy requirement. The proposed optimal strategy has been
compared with a coherent strategy, reminiscent of the one implemented in
common CD readers, showing that the former saves orders of magnitude of energy
if compared with the latter, and moreover allows for perfect discrimination
with finite energy.

In this paper we first extend the results of \cite{BDD11} to the case of
unambiguous ideal quantum reading - namely, the minimum energy unambiguous
discrimination of optical devices. We provide the optimal strategy for
unambiguous discrimination of beamsplitters with probability of failure under
a given threshold, while minimizing the energy requirement. We show that the
optimal strategy does not require any ancillary mode - while in the presence
of noise and loss ancillary states improve the performance of the quantum
reading setup \cite{Sac05, Pir11}. Both strategies for ambiguous and
unambiguous quantum reading reduce to the same optimal strategy for perfect
discrimination if the probability of error (in the former case) or the
probability of failure (in the latter case) is set to zero. Then, we present
some experimental setups implementing such optimal strategies which are
feasible with present day quantum optical technology, in terms of preparations
of single-photon input states, linear optics and photodetectors. Finally, we
will notice that in the experimental implementation of perfect discrimination
the noise is negligible.

There are only a few papers reporting on the experimental implementation of
discrimination of quantum devices. Ref. \cite{ZPWRLHG08} deals with perfect
discrimination between single-bit unitary operations using a sequential
scheme. In Ref.~\cite{LRO09} the authors demonstrate unambiguous
discrimination of non-orthogonal processes employing entanglement. In the
present paper we report on an experimental realization of the perfect
quantum-process discrimination optimized with respect to the minimal energy
flux through the unknown device.

The paper is organized as follows. First, in Section \ref{sect:theory} we
provide general results for the unambiguous discrimination of optical devices
(Section \ref{sect:discdev}) and the optimal strategy for the unambiguous
discrimination of beamsplitters (Section \ref{sect:discbs}). Then, in Section
\ref{sect:setup} we describe some experimental setups for the optimal
ambiguous (Section \ref{sect:ambset}), unambiguous (Section
\ref{sect:unambset}), and perfect (Section \ref{sect:perfset}) discrimination
of beamsplitters. Then, in Section \ref{sect:exper}, we provide the results of
the experimental implementation of the setup for perfect
discrimination. Finally, Section \ref{sect:concl} is devoted to conclusions
and future perspectives.

\section{Unambiguous quantum reading}\label{sect:theory}

A $M$-modes quantum optical device \cite{Leo03} is described by a unitary
operator $U$ relating $M$ input optical modes with annihilation operators
$a_i$ on $\hilb{H}_i$, to $M$ output optical modes with annihilation operators
$a_i'$ on $\hilb{H}_{i'}$, where $\hilb{H}_i$ denotes the Fock space of the
optical mode $i$. We denote the total Fock space as $\hilb{H}=\bigotimes_i
\hilb{H}_i$.

An optical device is called \emph{linear} if the operators of the output modes
are related to the operator of the input modes by a linear transformation,
namely
\begin{align}\label{eq:smatrix}
  \left(\begin{array}{c}{\bf a'}\\{\bf a'}^\dagger\end{array}\right) = S
    \left(\begin{array}{c}{\bf a}\\{\bf a}^\dagger\end{array}\right), \qquad S
      := \left( \begin{array}{cc} A & B \\ \bar{B}&\bar{A} \end{array} \right)
\end{align}
where $S$ is called scattering matrix, $\bar{X}$ denotes the complex conjugate
of $X$, ${\bf a} = (a_1,\dots a_N)$ is the vector of annihilation operators of
the input mode, and analogously ${\bf a'}$ for the output modes. If $B=0$ in
Eq. \eqref{eq:smatrix} the device is called \emph{passive} and conserves the
total number of photons, that is $\bra{\psi} N \ket{\psi} = \bra{\psi}
U^\dagger N U\ket{\psi}$ with $N := \sum_i a_i^\dagger a_i$ the \emph{number
  operator} on $\hilb{H}$. In the following, for any pure state $\ket{\psi}$,
we denote with $\psi := \ket{\psi}\bra{\psi}$ the corresponding projector. For
any Fock space $\hilb{H}$, we denote with $\ket{n}$ a Fock basis in $\hilb{H}$
($\ket{0}$ denotes the state of the vacuum).

Suppose we want to discriminate between two linear optical passive devices
$U_1$ and $U_2$. If a single use of the unknown device is available, the most
general strategy consists of preparing a bipartite input state $\rho \in
\mathcal{B}(\hilb{H} \otimes \hilb{K})$ ($\hilb{K}$ is an ancillary Fock space
with mode operators $b_i$), applying locally the unknown device and performing
a bipartite POVM $\Pi$ on the output state $(\mathcal{U}_x \otimes
\mathcal{I}_{\hilb{K}} ) \rho = (U_x \otimes I_\hilb{K})\rho (U^\dagger_x
\otimes I_\mathcal{K})$ ($x$ can be either $1$ or $2$).
\begin{align}\label{eq:strategy}
  \begin{aligned}
    \Qcircuit @C=0.7em @R=1em { \multiprepareC{1}{\rho} & \ustick{\hilb{H}}
      \qw & \gate{U_x} & \qw & \ghost{\Pi}\\ \pureghost{\rho} &
      \ustick{\hilb{K}} \qw & \qw & \qw & \multimeasureD{-1}{\Pi} }
  \end{aligned}.
\end{align}

The choice of $\Pi$ in Eq. \eqref{eq:strategy} depends on the figure of merit
taken into account. For example, for ambiguous discrimination $\Pi =
\{\Pi_1,\Pi_2\}$ and one's task is to minimize the probability of error
\begin{align}\label{eq:perror}
  P_E(\rho, U_1, U_2) := \Tr[(\mathcal{U}_1 \otimes
    \mathcal{I}_\mathcal{H})(\rho)\Pi_2 + (\mathcal{U}_2 \otimes
    \mathcal{I}_\mathcal{H})(\rho)\Pi_1], \nonumber
\end{align}
with $0 \le P_E(\rho, U_1, U_2) \le 1/2$. When $p_1=p_2=1/2$ the minimal
probability of error has been proven to be given by the following function
\cite{Hel76} of $\rho$,
\begin{align}
  P_E(\rho^*, U_1, U_2) = \frac12 \left( 1-
  ||\left((\mathcal{U}_1-\mathcal{U}_2) \otimes \mathcal{I}_{\hilb{K}} \right)
  \rho ||_1 \right),
\end{align}
where $||X||_1 = \Tr[\sqrt{X^\dagger X}]$ denotes the trace norm.

For unambiguous discrimination $\Pi = \{\Pi_1, \Pi_2, \Pi_I\}$,
$\Tr[(\mathcal{U}_1 \otimes \mathcal{I}_\mathcal{H})(\rho)\Pi_2] =
\Tr[(\mathcal{U}_2 \otimes \mathcal{I}_\mathcal{H})(\rho)\Pi_1] = 0$ and one's
task is to minimize the probability of inconclusive outcome (failure)
\begin{align}\label{eq:pfailure}
  P_F(\rho,U_1,U_2) := \Tr[(\mathcal{U}_1 \otimes \mathcal{I}_\hilb{H} +
    \mathcal{U}_2 \otimes \mathcal{I}_\hilb{H})(\rho)\Pi_I],
\end{align}
with $0 \le P_F(\rho,U_1,U_2) \le 1$.

Upon denoting with $E_D(\rho) := \Tr[\rho (N \otimes I_{\hilb{K}})]$ the
energy that flows through the unknown device, the total energy of the input
state is $E(\rho) := E_D + \Tr[\rho (I_{\hilb{H}} \otimes N_{\hilb{K}})]$.

We introduce now the ambiguous (unambiguous) quantum reading problem. For any
set of two optical devices $\{U_1, U_2\}$ and any threshold $q$ in the
probability of error (failure), find the minimum energy input state $\rho^*$
that allows us to ambiguously (unambiguously) discriminate between $U_1$ and
$U_2$ with probability of error (failure) not greater than $q$, namely
\begin{align}\label{eq:figmer1}
  \rho^* = \arg \min_{\rho \textrm{ s.t. } P(\rho, U) \leq q} E(\rho).
\end{align}
where $P(\rho,U)=P_E(\rho,U)$ for the ambiguous discrimination problem and
$P(\rho,U)=P_F(\rho,U)$ for the unambiguous discrimination problem.

\subsection{Unambiguous quantum reading of optical devices}\label{sect:discdev}

The problem in Eq. \eqref{eq:figmer1} has been already solved in \cite{BDD11}
for the case of ambiguous discrimination. Here we generalize the results
obtained in \cite{BDD11} to the unambiguous discrimination problem.

First, notice that for any POVM $\Pi$ we have $P_F((\mathcal{U}_1 \otimes
\mathcal{I}_{\hilb{K}})\rho, I, U_2U_1^\dagger ) = P_F(\rho, U_1, U_2)$ and
$E((\mathcal{U}_1 \otimes I_{\hilb{K}})\rho) = E(\rho)$, so we can restrict
our analysis to the case in which $U_1 = I$ and $U_2 = U$, and identify
$P_F(\rho,I,U) = P_F(\rho,U)$.

Then, notice that without loss of generality the constraint in
Eq. \eqref{eq:figmer1} can be restated as $P_F(\rho,U) = q$. Indeed, for any
POVM $\Pi$ we have that $P_F(\rho,U)$ is a continuous function, and that
$P_F(\ketbra{0}{0},U) = 1$. So for any $\rho$ with $P_F(\rho,U) < q$ there
exists a $0 < \alpha \leq 1$ such that $P_F((1 - \alpha )\rho + \alpha
\ketbra{0}{0},U) = q$. Since $E((1 - \alpha )\rho + \alpha \ketbra{0}{0}) <
E(\rho)$, the constraint in Eq. \eqref{eq:figmer1} becomes $P_F(\rho,U) = q$.

\begin{thm}[Optimal state is pure]\label{thm:pure}
  For any optical device $U$ and any threshold $q$ in the probability of
  failure $P_F(\rho,U)$, there exists a state $\rho^*$ which minimizes
  Eq. \eqref{eq:figmer1} such that $\rho^*$ is pure.
\end{thm}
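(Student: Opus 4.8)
The plan is to fix the measurement and thereby turn the problem into the minimization of a linear functional over a convex set of density operators, whose extreme points I will show to be pure. Following the reduction already made I take $U_1 = I$ and $U_2 = U$, and I note that both quantities entering the problem are linear in $\rho$: the energy is $E(\rho) = \Tr[\rho\, G]$ with $G := N \otimes I_{\hilb{K}} + I_{\hilb{H}} \otimes N_{\hilb{K}} \ge 0$, and, for a \emph{fixed} unambiguous POVM $\Pi = \{\Pi_1,\Pi_2,\Pi_I\}$, the failure probability is $P_F(\rho,\Pi) = \Tr[\rho\, F]$ with $F := \Pi_I + (U^\dagger \otimes I_{\hilb{K}})\,\Pi_I\,(U \otimes I_{\hilb{K}}) \ge 0$.

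First I would let $\rho^*$ be any minimizer of Eq.~\eqref{eq:figmer1} and let $\Pi$ be an associated optimal POVM, so that $P_F(\rho^*,\Pi) = q$. The zero-error conditions read $\Tr[\rho^*\, M_1] = \Tr[\rho^*\, M_2] = 0$ with $M_1 := \Pi_2 \ge 0$ and $M_2 := (U^\dagger \otimes I_{\hilb{K}})\,\Pi_1\,(U \otimes I_{\hilb{K}}) \ge 0$. Since $\rho^*, M_1, M_2 \ge 0$, each trace vanishing forces $\supp \rho^* \subseteq \ker M_1 \cap \ker M_2 =: \hilb{V}$. The crucial point is that all of these constraints are linear, so I may now freeze $\Pi$ and consider the convex feasible set
\begin{align}
  \mathcal{C} := \{\rho \ge 0 :\ \Tr\rho = 1,\ \supp\rho \subseteq \hilb{V},\ \Tr[\rho\, F] \le q\}. \nonumber
\end{align}
Every $\rho \in \mathcal{C}$ admits $\Pi$ as a valid unambiguous POVM with failure at most $q$, hence is feasible for Eq.~\eqref{eq:figmer1}, so $E(\rho) \ge E(\rho^*)$; since $\rho^* \in \mathcal{C}$, minimizing the linear functional $E$ over $\mathcal{C}$ reproduces exactly the optimal value.

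Then I would use that the minimum of a linear functional over the compact convex set $\mathcal{C}$ is attained at an extreme point $\rho^{\star}$, and show any extreme point has rank one. Writing $P$ for the projector onto the range of a candidate extreme point $\rho_0$ of rank $r = \operatorname{rank}(\rho_0)$, the Hermitian perturbations $H = P H P$ span a real space of dimension $r^2$ and keep $\rho_0 \pm tH \ge 0$ with support in $\hilb{V}$ for small $t$. If $\Tr[\rho_0\, F] \le q$ is strict, feasibility imposes only $\Tr[H] = 0$; if it is active, it imposes $\Tr[H] = 0$ and $\Tr[H F] = 0$. In either case at most two real linear conditions are placed on an $r^2$-dimensional space, so whenever $r \ge 2$ one has $r^2 \ge 4 > 2$ and a nonzero admissible $H$ exists, whence $\rho_0 = \tfrac12(\rho_0 + tH) + \tfrac12(\rho_0 - tH)$ is a nontrivial convex combination of feasible states, contradicting extremality. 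Hence $r = 1$, and $\rho^{\star}$ is pure; being a minimizer over $\mathcal{C}$ it attains the optimal energy while remaining feasible, which proves the claim.

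The main obstacle I anticipate is the reduction to a \emph{fixed} measurement: one must argue that clamping $\Pi$ to the POVM optimal for $\rho^*$ does not raise the optimal value, which is exactly the observation that every state in $\mathcal{C}$ is globally feasible with the same $\Pi$, so that the pure minimizer found for the clamped problem is genuinely optimal for Eq.~\eqref{eq:figmer1}. A secondary point requiring care is the extreme-point rank count itself — the bound $r^2 \le 2$ relies on counting \emph{real} dimensions of Hermitian matrices and on the zero-error conditions having been absorbed into the support restriction rather than being treated as extra linear constraints. Compactness of $\mathcal{C}$, needed for attainment, follows because $G$ has discrete spectrum, so energy-bounded states form a trace-norm compact set; alternatively one restricts to the finite-energy sector from the outset.
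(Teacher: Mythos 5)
Your proof is correct, but it takes a genuinely different route from the paper's. The paper's argument is a two-line scalarization: it replaces the constrained problem of Eq.~\eqref{eq:figmer1} by the minimization of $C(\rho,U)=p\,P_F(\rho,U)+(1-p)E(\rho)$, observes that (for fixed POVM) both terms are linear in $\rho$, and concludes that the minimum is attained ``on the boundary,'' i.e.\ at a pure state. That argument is shorter and carries over verbatim to the ambiguous case, but it leaves implicit the existence of a multiplier $p$ realizing the prescribed threshold $q$, and it conflates the boundary of the state space with its extreme points (only the latter are pure). You instead clamp the POVM that is optimal for a given minimizer $\rho^*$, absorb the zero-error conditions into a support restriction (correctly using that $\Tr[\rho M]=0$ with $\rho,M\ge 0$ forces $\supp\rho\subseteq\ker M$), and run an extreme-point perturbation argument on the resulting convex set $\mathcal{C}$: at most two real-linear constraints cannot exhaust the $r^2\ge 4$ Hermitian directions available inside the support of a rank-$r\ge 2$ state, so extreme points are rank one, and linearity of $E$ plus your observation that every element of $\mathcal{C}$ remains globally feasible with the same $\Pi$ closes the loop. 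This buys you two things the paper does not supply: an explicit justification that freezing the measurement loses nothing, and at least a sketch of compactness/attainment in the infinite-dimensional Fock space, which the paper silently assumes. Two small points to tighten: once you compactify $\mathcal{C}$ by adding the energy cap $\Tr[\rho G]\le E(\rho^*)$, the constraint count rises to three, which your slack $r^2\ge 4>3$ still covers but should be stated; and a candidate extreme point of infinite rank is handled by perturbing within a two-dimensional eigen-subspace of $\rho_0$ (four Hermitian directions against three constraints), since the $r^2$ count as written presumes $r<\infty$. Like the paper, you presuppose that a global minimizer $\rho^*$ exists, which is strictly part of the statement; both proofs share this gap, though your compactness remark is the natural ingredient for filling it.
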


\begin{proof}
  Notice that Eq. \eqref{eq:figmer1} is equivalent to $C(\rho,U) := p
  P_F(\rho,U) + (1-p) E(\rho)$, for any fixed value of $p$. If $\rho^*$ is the
  state that minimizes $C(\rho,U)$, for $q := P(\rho^*,U)$ we have that
  $E(\rho^*)$ gives the minimum possible value for the energy. Since
  $P_F(\rho,U)$ and $E(\rho)$ are linear functions of $\rho$, it follows that
  $C(\rho,U)$ is a linear function of $\rho$ and its minimum is attained on
  the boundary of its domain, namely for a pure state $\ket{\psi^*}$.
\end{proof}

As a consequence of Proposition \ref{thm:pure}, Eq. \eqref{eq:figmer1} can be
restated as
\begin{align}\label{eq:figmer3}
  \psi^* = \arg\min_{\psi \textrm{ s.t. } P(\psi,U) = q} E(\psi).
\end{align}
For pure states, the probability of failure in the unambiguous discrimination
when $p_1=p_2=1/2$ given by Eq. \eqref{eq:pfailure} has been proved to be
given by \cite{Sed09}
\begin{align}\label{eq:pfailure2}
  P_F(\psi^*,U) & = |\bra{\psi} U \ket{\psi}|.
\end{align}

\begin{thm}[No ancillary modes are required]\label{thm:discdev}
  For any optical device $U$ and any threshold $q$ in the probability of
  failure $P_F(\rho,U)$, there exists a state $\rho^*$ which minimizes
  Eq. \eqref{eq:figmer1} such that $\rho^* \in \hilb{H}$.
\end{thm}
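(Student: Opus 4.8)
The plan is to reduce the optimization to the reduced state on $\hilb{H}$ and then to invoke the convexity of the joint numerical range of three Hermitian operators. By Proposition \ref{thm:pure} I may take the minimizer to be a pure bipartite state $\ket{\psi} \in \hilb{H} \otimes \hilb{K}$ with $P_F(\psi,U) = q$, and my goal is to produce an ancilla-free pure state that does at least as well. Writing $\rho_{\hilb{H}} := \Tr_{\hilb{K}}[\psi]$ for the reduced state, Eq. \eqref{eq:pfailure2} gives $P_F(\psi,U) = |\bra{\psi}(U \otimes I_{\hilb{K}})\ket{\psi}| = |\Tr[\rho_{\hilb{H}} U]|$, while the energy splits as $E(\psi) = \Tr[\rho_{\hilb{H}} N] + \Tr[\psi(I_{\hilb{H}} \otimes N_{\hilb{K}})] \ge \Tr[\rho_{\hilb{H}} N]$, since $N_{\hilb{K}} \ge 0$. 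Thus the constraint and the relevant lower bound on the energy depend on $\psi$ only through the three real numbers $\Tr[\rho_{\hilb{H}} N]$, $\re\Tr[\rho_{\hilb{H}} U]$ and $\im\Tr[\rho_{\hilb{H}} U]$. It therefore suffices to exhibit a pure state $\ket{\phi} \in \hilb{H}$ whose triple $(\bra{\phi}N\ket{\phi}, \re\bra{\phi}U\ket{\phi}, \im\bra{\phi}U\ket{\phi})$ equals that of $\rho_{\hilb{H}}$: such a $\ket{\phi}$ has $P_F(\phi,U) = q$ and $E(\phi) = \bra{\phi}N\ket{\phi} = \Tr[\rho_{\hilb{H}} N] \le E(\psi)$, hence is again optimal and uses no ancilla.

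To build $\ket{\phi}$ I spectrally decompose $\rho_{\hilb{H}} = \sum_i \lambda_i \ketbra{u_i}{u_i}$ with $\lambda_i \ge 0$ and $\sum_i \lambda_i = 1$. The target triple is then the convex combination $\sum_i \lambda_i\, t_i$ of the pure-state triples $t_i := (\bra{u_i}N\ket{u_i}, \re\bra{u_i}U\ket{u_i}, \im\bra{u_i}U\ket{u_i})$. Consider the three Hermitian operators $N$, $\re U := (U + U^\dagger)/2$ and $\im U := (U - U^\dagger)/(2i)$. The set $\{(\bra{\phi}N\ket{\phi}, \bra{\phi}\,\re U\,\ket{\phi}, \bra{\phi}\,\im U\,\ket{\phi}) : \|\phi\| = 1\}$ is precisely their joint numerical range, and for three Hermitian operators acting on a space of dimension at least three this set is convex. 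Since $\dim\hilb{H} \ge 3$, the convex combination $\sum_i \lambda_i\, t_i$ again belongs to the joint numerical range, i.e. it is attained by a single unit vector $\ket{\phi} \in \hilb{H}$, which is the desired ancilla-free state. No property of $U$ beyond unitarity is used, only that $\re U$ and $\im U$ are Hermitian.

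The one delicate point is this convexity statement, compounded by the fact that $N$ is unbounded. Convexity of the joint numerical range of three Hermitian operators in dimension $\ge 3$ is classical, but to stay safely in a finite-dimensional setting I would first restrict attention to the subspace $V$ spanned by the $\ket{u_i}$ (enlarged with a few Fock vectors to reach dimension at least three if the Schmidt rank is smaller). On $V$ all the expectation values above, and hence the triples $t_i$ and their convex combination, are unchanged, while the compressions of $N$, $\re U$, $\im U$ are bounded Hermitian operators, so the convexity result applies verbatim and returns $\ket{\phi} \in V \subseteq \hilb{H}$. Everything else is routine, since the reductions to the constraint $P_F(\psi,U) = q$ and to pure states are already in hand; applying the construction to an optimal bipartite $\ket{\psi}$ then yields an equally optimal minimizer in $\hilb{H}$.
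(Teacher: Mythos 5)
Your route is genuinely different from the paper's, and it is worth comparing. The paper argues directly on the state vector: it expands the optimal pure state as $\ket{\psi}=\sum_i c_i\ket{i}\ket{\chi_i}$ and replaces every ancilla state $\ket{\chi_i}$ by the vacuum, claiming $P_F$ is unchanged while the energy can only decrease. Read literally, that argument needs $\ket{i}$ to be a joint eigenbasis of $U$ and $N$ (which exists because the device is passive, $[U,N]=0$); for a generic basis the claimed equality fails, since $\bra{\psi}(U\otimes I_{\hilb{K}})\ket{\psi}$ picks up the overlaps $\braket{\chi_j}{\chi_i}$. Your reduction --- observing that the constraint and the energy bound depend on $\psi$ only through the triple $\left(\Tr[\rho_{\hilb{H}}N],\, \re\Tr[\rho_{\hilb{H}}U],\, \im\Tr[\rho_{\hilb{H}}U]\right)$, and then realizing that triple on a single unit vector of $\hilb{H}$ via convexity of the joint numerical range of three Hermitian operators (Au-Yeung--Poon, dimension $\geq 3$) --- is a sound strategy, and as you note it uses only unitarity of $U$, not passivity. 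That is a genuine gain in generality over the paper's proof, at the cost of invoking a nontrivial classical theorem where the paper needs only a basis manipulation.

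There is, however, one step that fails as written: your finite-dimensional safety net. In Fock space the reduced state $\rho_{\hilb{H}}$ of an optimal $\ket{\psi}$ may have infinite rank, in which case the subspace $V$ spanned by the eigenvectors $\ket{u_i}$ is infinite-dimensional and you are not ``safely in a finite-dimensional setting'' at all. Two things then break at once: (i) the compressions of $N$, $\re U$, $\im U$ to $V$ are no longer Hermitian matrices to which the Au-Yeung--Poon theorem applies verbatim (the compression of $N$ may even be unbounded), and (ii) the target point becomes a countably infinite convex combination $\sum_i\lambda_i t_i$, whereas convexity of a set only delivers finite combinations --- the joint numerical range need not be closed in infinite dimension, so the limit point could a priori escape it. Both defects are repairable: convexity of the joint numerical range holds on any Hilbert space of dimension $\geq 3$ (given unit vectors $v_1,v_2$ attaining $t_1,t_2$, compress to a three-dimensional subspace of the form domain of $N$ containing them and apply the matrix theorem there, noting that expectations of compressions agree with the original expectations on that subspace); and once the range over finite-energy unit vectors is known to be a convex subset of $\mathbb{R}^3$, the barycenter of any probability measure supported on a convex subset of $\mathbb{R}^n$ with finite first moment lies in the set (supporting-hyperplane induction on the dimension of the affine hull), which absorbs the countable combination. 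With that patch your argument is complete and, in the stated generality, stronger than the paper's; without it, your proof covers only optimal states of finite Schmidt rank.
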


\begin{proof}
  Any pure input state can be written as $\ket{\psi} = \sum_i c_i
  \ket{i}\ket{\chi_i}$ where $\ket{i}$ is an orthonormal basis in
  $\mathcal{H}$ and $\ket{\chi_i}$ are normalized states in $\hilb{K}$. If we
  define $\ket{{\psi}'} := \sum_i c_i \ket{i}\ket{0}$, it follows that
  $P_F(\psi,U) = P_F(\psi',U)$ while $E(\psi) \geq E({\psi}')$, which proofs
  of the statement.
\end{proof}

Since no ancillary modes are required, the energy $E_D(\psi)$ that flows
through the unknown device is equal to the total energy of the input state
$E(\rho)$, so minimizing the former instead than the latter - namely,
replacing $E(\psi)$ with $E_D(\psi)$ in Eq. \eqref{eq:figmer3} - does not
change the optimal state.

Notice that the generalization of the results in \cite{BDD11} provided here
basically depends on some common properties of the probability of error in
Eq. \eqref{eq:perror} (for ambiguous discrimination) and the probability of
failure in Eq. \eqref{eq:pfailure} (for unambiguous discrimination), namely
the linearity in $\rho$, the equalities $P_E(\ketbra{0}{0}) =
P_F(\ketbra{0}{0}) = 0$ [Eq. \eqref{eq:pfailure}], and the monotonicity in
$|\bra{\psi} U \ket{\psi}|$ (Eq. \eqref{eq:pfailure2}).

\subsection{Unambiguous quantum reading of beamsplitters}\label{sect:discbs}

A beamsplitter is a two-mode linear passive quantum optical device such that
$A \in SU(2)$ in Eq. \eqref{eq:smatrix}. In the following we will use the
basis $\{ \ket{n,m} \}$ with respect to which $A$ is diagonal with eigenvalues
$e^{\pm i \delta}$, $0 \le \delta \le \pi$. With this choice, for any
$\ket{\psi} = \sum_{n,m=0}^\infty \alpha_{n,m} \ket{n,m}$, we have $U
\ket{n,m} = e^{i\delta (n-m)}\ket{n,m}$, so that $\bra{\psi} U \ket{\psi} =
\sum_{n,m=0}^\infty |\alpha_{n,m}|^2 e^{i\delta(n-m)}$ and
$\bra{\psi}N\ket{\psi} = \sum_{n,m=0}^\infty |\alpha_{n,m}|^2 (n+m)$. We
notice that both these expressions only depends on the squared modulus of the
coefficients $\alpha_{n,m}$, so we can assume $\alpha_{n,m}$ to be real and
positive.

Here $\floor{x}$ ($\ceil{x}$) denotes the maximum (minimum) integer number
smaller (greater) than $x$.
\begin{thm}[Unambiguous quantum reading of beamsplitters]\label{thm:discbs}
  For any beamsplitter $U$ and for any threshold $q$ in the probability of
  failure, there exists a state $\psi^*$ which minimizes
  Eq. \eqref{eq:figmer3} such that
  \begin{align}
    \ket{\psi^*} = \frac{1}{\sqrt2} \alpha (\ket{0,n^*} + \ket{n^*,0}) +
    \beta \ket{00},
  \end{align}
  where $|\alpha| = \sqrt{\frac{1-q}{1-\cos(\delta n_1)}}$, $|\beta| =
  \sqrt{1-|\alpha|^2}$, $n^* = \arg \min_{\floor{x^*},\ceil{x^*}} E(\psi^*)$,
  and $x^* = \min(x \in \mathbb{R}^+ | \delta x = \tan (\delta x/2))$.
\end{thm}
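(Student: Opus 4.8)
The plan is to reduce the problem to a constrained optimization over the modulus coefficients, using the results already established. By Propositions~\ref{thm:pure} and~\ref{thm:discdev} I may restrict to a pure state $\ket{\psi} = \sum_{n,m} \alpha_{n,m}\ket{n,m}$ living entirely in $\hilb{H}$ with no ancilla, and by the remark following the diagonalization of $A$ I may take all $\alpha_{n,m}$ real and nonnegative. In this basis the two relevant quantities are $P_F(\psi,U) = |\bra{\psi}U\ket{\psi}| = |\sum_{n,m}\alpha_{n,m}^2 e^{i\delta(n-m)}|$ and $E(\psi) = \sum_{n,m}\alpha_{n,m}^2(n+m)$. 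Writing $p_{n,m} := \alpha_{n,m}^2$, both are linear in the probability vector $\{p_{n,m}\}$, so the task in Eq.~\eqref{eq:figmer3} becomes: minimize $\sum_{n,m}p_{n,m}(n+m)$ over probability distributions $\{p_{n,m}\}$ subject to $|\sum_{n,m}p_{n,m}e^{i\delta(n-m)}| = q$.

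First I would argue that the support can be collapsed onto the diagonal-plus-vacuum structure claimed. The key observation is that the phase $e^{i\delta(n-m)}$ depends only on the difference $k := n-m$, while the energy cost $n+m$ grows with the total. So for a fixed contribution to the failure amplitude, one wants to realize each phase $e^{i\delta k}$ using the cheapest Fock states carrying that value of $k$; for $k>0$ the cheapest state is $\ket{k,0}$ (energy $k$) and for $k<0$ it is $\ket{0,|k|}$ (energy $|k|$), while the vacuum $\ket{0,0}$ gives phase $1$ at zero energy. Thus an optimal state may be assumed supported on $\{\ket{0,0}\} \cup \{\ket{k,0},\ket{0,k}\}_{k\ge 1}$. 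Since $P_F$ is a modulus, I would next argue by a symmetrization/convexity argument that it is optimal to use a single nonzero value of $k$, call it $n^*$, balanced symmetrically between $\ket{n^*,0}$ and $\ket{0,n^*}$ so that the failure amplitude is real: this gives $\bra{\psi}U\ket{\psi} = |\alpha|^2\cos(\delta n^*) + |\beta|^2$, which I set equal to $q$ and solve for $|\alpha|^2 = (1-q)/(1-\cos(\delta n^*))$, recovering the stated coefficients with $|\beta|^2 = 1-|\alpha|^2$.

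With the state reduced to a one-parameter family indexed by the integer $n^*$, the remaining step is to minimize the energy $E(\psi^*) = |\alpha|^2\, n^* = n^*(1-q)/(1-\cos(\delta n^*))$ over positive integers $n^*$. I would treat $n^*$ as a continuous variable $x$, differentiate $f(x) = x/(1-\cos(\delta x))$, and set the derivative to zero; the stationarity condition reduces after simplification to the transcendental equation $\delta x = \tan(\delta x/2)$, whose smallest positive root is the $x^*$ in the statement. Since the true optimizer must be an integer, the minimizing $n^*$ is one of the two integers bracketing this continuous minimizer, giving $n^* = \arg\min_{\floor{x^*},\ceil{x^*}} E(\psi^*)$.

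The main obstacle I expect is rigorously justifying the collapse of the support onto a single pair $\{\ket{n^*,0},\ket{0,n^*}\}$ rather than an arbitrary distribution over many differences $k$. Because $P_F$ involves the modulus of a complex sum, spreading weight over several phases could in principle interfere constructively and cheat the energy constraint; I would need to show, via an exchange argument or a Lagrange-multiplier analysis of the linear program, that concentrating on the single most energy-efficient difference achieving the required failure probability is never worse. The subsequent minimization over $n^*$ is then a routine calculus exercise, modulo the care needed to handle the integrality constraint and to confirm that $x^*$ is indeed a minimum rather than a saddle of $f$.
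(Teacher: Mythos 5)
Your skeleton matches the paper's proof (reduce to NOON-type superpositions, then to a single NOON level plus vacuum, then minimize $n(1-q)/(1-\cos(\delta n))$ over integers via $\delta x = \tan(\delta x/2)$, with the optimum at $\floor{x^*}$ or $\ceil{x^*}$), and your calculus endgame is correct. But the step you yourself flag as ``the main obstacle'' --- justifying that the support collapses to a \emph{single} nonzero difference $k$ plus the vacuum --- is precisely the heart of the paper's proof, and you leave it as a conjecture rather than proving it. The paper closes it in two moves. First, it symmetrizes over \emph{all} levels at once: replacing $\ket{\psi}=\sum_{n,m}\alpha_{n,m}\ket{n,m}$ by $\ket{\psi'}=\tfrac{1}{\sqrt{2}}\sum_l \alpha'_l(\ket{l,0}+\ket{0,l})$ with $|\alpha'_l|^2=\sum_{|n-m|=l}|\alpha_{n,m}|^2$ makes $\bra{\psi'}U\ket{\psi'}=\sum_l |\alpha'_l|^2\cos(\delta l)$ real while lowering both the energy and $|\bra{\psi}U\ket{\psi}|$. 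This dissolves your interference worry entirely: once the constraint reads $\sum_l p_l\cos(\delta l)=q$ it is a genuine real linear constraint, and no complex sum remains whose modulus could be ``cheated.'' Your proposal creates the difficulty by doing the steps in the wrong order --- you collapse to a single $k$ first and only then balance $\ket{k,0}$ against $\ket{0,k}$; balancing every level first linearizes the problem before any support reduction is attempted.

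Second, with the constraint linearized, the paper runs an explicit exchange argument (the rigorous form of the LP/basic-feasible-solution reasoning you gesture at): if the optimal state has $N\geq 3$ nonzero coefficients, one picks $n_1,n_2$ with $\cos(\delta n_1)\leq q\leq\cos(\delta n_2)$, splits $\ket{\psi^*}$ into a two-level state $\ket{\chi}$ satisfying $\bra{\chi}U\ket{\chi}=q$ and a residual $\ket{\xi}$ also meeting the constraint, and uses $\bra{\psi^*}N\ket{\psi^*}=\epsilon\bra{\chi}N\ket{\chi}+(1-\epsilon)\bra{\xi}N\ket{\xi}$ to conclude that one of the two pieces does at least as well, contradicting optimality unless two levels suffice. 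Even after that, reaching your claimed form requires showing that one of the two surviving levels may be taken to be $l=0$ (the vacuum); the paper does not reprove this but invokes the corresponding fact from the ambiguous-reading analysis of Bisio--Dall'Arno--D'Ariano, computing $\bra{\psi^*}N\ket{\psi^*}=\bigl(n_2\cos(\delta n_1)-n_1\cos(\delta n_2)+q(n_1-n_2)\bigr)/\bigl(\cos(\delta n_1)-\cos(\delta n_2)\bigr)$ and setting $n_2=0$. Your proposal jumps directly from ``support on $\{\ket{0,0}\}\cup\{\ket{k,0},\ket{0,k}\}_{k\geq 1}$'' to ``one $k$ plus vacuum'' without either the exchange step or the $n_2=0$ step, so as written it is an accurate plan with the two decisive lemmas missing rather than a complete proof.
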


\begin{proof}
  First we prove that the optimal state in Eq. \ref{eq:figmer3} is a
  superposition of NOON states.  For any state $\ket{\psi} = \sum_{n,m}
  \alpha_{n,m} \ket{n,m}$, the state $\ket{\psi'} = \sqrt{1/2} \sum_l
  \alpha'_{l} (\ket{l,0}+\ket{0,l})$ with $|\alpha'_l|^2 = \sum_{|n-m|=l}
  |\alpha_{nm}|^2$ is such that
  \begin{align}
    \bra{\psi'}N\ket{\psi'} & = \sum_{n,m=0}^\infty \alpha_{nm}^2 |n-m| \le
    \bra{\psi}N\ket{\psi},\\ | \bra{\psi'}U\ket{\psi'} | & = \left |
    \sum_{n,m=0}^\infty \alpha_{nm}^2 \cos(\delta|n-m|) \right| \le
    |\bra{\psi}U\ket{\psi} |. \nonumber
  \end{align}
  So we have $\bra{\psi}U\ket{\psi} \in \mathbb{R}$ and the constraint in
  Eq. \eqref{eq:figmer3} becomes $\bra{\psi}U\ket{\psi} = q$.

  Then we prove that the optimal state is the superposition of two NOON
  states. Let $\ket{\psi^*} = \sqrt{1/2} \sum_n \alpha^*_{n}
  (\ket{n,0}+\ket{0,n})$ be the optimal state and let the set $\{\alpha_{n}^*
  \}$ have $N \geq 3$ not-null elements. Then there exist $n_1$ and $n_2$ such
  that $\alpha_{n_1},\alpha_{n_2}\neq 0$ and $\cos(\delta n_1) \leq q \leq
  \cos(\delta n_2)$. Define $\ket{\chi} := 1/\sqrt2 \sum_{i=1,2} \beta_{n_i}
  (\ket{n_i,0} + \ket{0,n_i})$ such that $\bra{\chi} U \ket{\chi} = q$, and
  $\ket{\xi} := 1/\sqrt2 (1-\epsilon)^{-1/2} \sum_n \gamma_n (\ket{n,0} +
  \ket{0,n})$, where
  \begin{align*}
   \gamma_n = \left\{ \begin{array}{ll} \alpha_n & \mbox{ if } n \ne
        n_1,n_2\\ \sqrt{\alpha_n^2 - \epsilon \beta_n^2} & \mbox{ if } n =
        n_1,n_2  \end{array} \right. ,
  \end{align*}
  and $\epsilon \le \min(\alpha_{n_1}/\beta_{n_1},\alpha_{n_2}/\beta_{n_2})$.
  Notice that $\bra{\xi}U\ket{\xi}=q$, and $\bra{\psi^*} N \ket{\psi^*} =
  \epsilon \bra{\chi}N\ket{\chi} + (1-\epsilon) \bra{\xi}N\ket{\xi}$. If
  $\bra{\chi} N \ket{\chi} = \bra{\psi^*} N \ket{\psi^*}$ the statement
  follows with $\ket{\psi} = \ket{\chi}$. If $\bra{\chi}N\ket{\chi} \neq
  \bra{\psi^*} N \ket{\psi^*}$, either $\bra{\chi}N\ket{\chi} < \bra{\psi^*} N
  \ket{\psi^*}$ or $\bra{\xi}N\ket{\xi} < \bra{\psi^*} N \ket{\psi^*}$, that
  contradicts the hypothesis that $\ket{\psi^*}$ is the optimal state.

  Finally we prove that the optimal state is the superposition of a NOON state
  and the vacuum. Let $\ket{\psi^*} = 1/\sqrt2 \sum_{i=1,2} \alpha_{n_i}
  (\ket{n_i,0} + \ket{0,n_i})$. Then
  \begin{align*}
    \bra{\psi^*} N \ket{\psi^*} = \frac{n_2 \cos(\delta n_1) - n_1\cos(\delta
      n_2) + q(n_1-n_2)}{\cos(\delta n_1)-\cos(\delta n_2)}.
  \end{align*}
  It is easy to verify (in \cite{BDD11} a proof of this fact is
  provided) that it is not restrictive to set $n_2=0$, so one has
  $\bra{\psi^*} N \ket{\psi^*} = (1-q) n (1-\cos(\delta n_1))^{-1}$.  Then one
  can see that it is not restrictive to choose $\pi/2 \le \delta n_1 \le \pi$,
  where $\bra{\psi^*} N \ket{\psi^*}$ can be proven to be a convex function
  that attains its minimum for $n_1 = \floor{x^*}, \ceil{x^*}$, with $x^* =
  \min(x \in \mathbb{R}^+ | \delta x = \tan (\delta x/2))$. The statement
  immediately follows.
\end{proof}

Notice that from Proposition \ref{thm:discbs} it immediately follows that
unambiguous discrimination between beamsplitters $U$ and $I$ can be achieved
only if the threshold $q$ in the probability of failure $P_F(\rho,U)$ satisfies
the inequality $q \geq \cos(\delta n^*)$ (an analogous inequality, namely
$K(q) \geq \cos(\delta n^*)$ with $K(q) = \sqrt{4q(1-q)}$, holds in the case
of ambiguous quantum reading addressed in \cite{BDD11}).

In \cite{BDD11}, the optimal coherent strategy - namely, a strategy making use
of coherent input states and homodyne measurements - for ambiguous quantum
reading is provided. A comparison between the optimal strategy and the optimal
coherent strategy showed that the former requires much less energy than the
latter when the same threshold in the probability of error is allowed. Here, we
notice that, since the support of coherent states is the entire Fock space, no
measurement exists projecting on its orthogonal complement. For this reason,
no coherent strategy exists for the unambiguous discrimination of optical
devices.

\section{Experimental setup for quantum reading}
\label{sect:setup}

In this Section we provide experimental setups for ambiguous, unambiguous, and
perfect quantum reading, which are feasible with present quantum optical
technology. The input is a single-photon state, that can be realized
e.g. through spontaneous parametric down conversion or through the attenuation
of a laser beam. The evolution is given by a circuit of beamsplitters, one of
which is the unknown one, and the final measurement is implemented through
photodetectors.

In Proposition \ref{thm:discdev} we proved that, for the unambiguous quantum
reading of optical devices, no ancillary modes are required. The same result
has been proved for the case of ambiguous quantum reading in
\cite{BDD11}. Nevertheless, the proposed setups for quantum reading make use
of three-modes input states - namely, an ancillary mode is employed. This
choice is due to the requirement to have an input state with fixed number of
photons in order to be able to take into account loss. For this reason, our
setup minimizes the energy $E_D(\rho)$ that flows through the unknown device,
while the total energy of the input state is fixed.

In the following, for any beamsplitter $X$ we denote with $A_X$ the $A$ matrix
of $X$ in Eq. \eqref{eq:smatrix}, so we write
\begin{align*}
  A_X = \left( \begin{array}{cc} r_X & -t_X \\ t_X & r_X \end{array} \right),
  \qquad A_X^\dagger = \left( \begin{array}{cc} r_X & t_X \\ -t_X &
    r_X \end{array} \right).
\end{align*}
We define the reflectivity $R_X$ and the transmittivity $T_X$ of $X$ as $R_X
:= |r_X|^2$ and $T_X = |t_X|^2$, respectively, with $R_X + T_X = 1$.

The general setup is given by a Mach-Zender interferometer with beamsplitters
$B$ and $B^\dagger$, acting on modes $2$ and $3$. In one of the harms of the
interferometer (corresponding to mode $2$), the following beamsplitters are
inserted
\begin{align*}
  \begin{aligned}
    \Qcircuit @C=2mm @R=1em { \prepareC{0} & \ustick{1} \qw &\qw &
      \multigate{1}{N} & \multigate{1}{D} & \multigate{1}{I,U} &
      \multigate{1}{D^\dagger} & \multigate{1}{N^\dagger} & \qw &
      \multimeasureD{2}{\Pi}\\ \prepareC{0} & \ustick{2} \qw &
      \multigate{1}{B} & \ghost{N} & \ghost{D} & \ghost{I,U} &
      \ghost{D^\dagger} & \ghost{N^\dagger} & \multigate{1}{B^\dagger} &
      \ghost{\Pi}\\ \prepareC{1} & \ustick{3} \qw & \ghost{B} & \qw & \qw &
      \qw & \qw & \qw & \ghost{B^\dagger} & \ghost{\Pi} }
  \end{aligned},
\end{align*}
where $N$ is a $50/50$ beamsplitter, $I, U$ is the unknown beamsplitter, and
$D$ is the beamsplitter diagonalizing $U$. The POVM $\Pi$ is different for
ambiguous and unambiguous quantum reading.

It is easy to verify that the composition of beamsplitters $DN$ reduces to a
phase shifter on mode $2$, namely
\begin{align}
  A_D = \frac1{\sqrt{2}} \left( \begin{array}{cc} 1 & 1 \\ i & -i \end{array}
  \right), \qquad A_D A_N = \left( \begin{array}{cc} 1 & 0 \\ 0 &
    i \end{array} \right).
\end{align}
It is easy to check that this phase shifter is irrelevant, so in the following
we will disregard it.

\subsection{Experimental setup for ambiguous quantum reading of beamsplitters}
\label{sect:ambset}

The optimal strategy for ambiguous quantum reading of beamsplitters has been
provided in \cite{BDD11}. Here we describe an experimental setup implementing
such strategy, namely the ambiguous discrimination of a beamsplitter randomly
chosen from the set $\{I, U\}$ with equal prior probabilities, with
probability of error $P_E(\rho,U)$ under a given threshold $q$ and minimal
energy flow trough the unknown device. In the following we set $K(q) :=
\sqrt{4q(1-q)}$.  According to \cite{BDD11}, in order to have $P_E(\rho, U)
\le q$, we must have $K(q) \geq \sqrt{R_U}$.

The experimental setup is then given by
\begin{align*}
  \begin{aligned}
    \Qcircuit @C=1em @R=1em { \prepareC{0} & \ustick{1} \qw & \qw &
      \multigate{1}{I,U} & \qw & \multigate{1}{M} & \qw &
      \measureD{I}\\ \prepareC{0} & \ustick{2} \qw & \multigate{1}{B} &
      \ghost{I,U} & \multigate{1}{B^\dagger} & \ghost{M} &
      \multigate{1}{N^\dagger} & \measureD{\Pi_U}\\ \prepareC{1} & \ustick{3}
      \qw & \ghost{B} & \qw & \ghost{B^\dagger} & \qw & \ghost{N^\dagger} &
      \measureD{\Pi_I} \gategroup{1}{6}{3}{8}{3mm}{--}}
  \end{aligned},
\end{align*}
where the reflectivities and transmittivities of beamsplitters $B$, $M$ and
$N^{\dagger}$ are given by
\begin{align*}
  R_B = \frac{K(q)-r_U}{1-r_U}, \qquad T_B = \frac{1-K(q)}{1-r_U},\\
  R_M = \frac{(1-K(q))(K(q)-r_U)}{(1-2q)^2},\\
  T_M = \frac{(1-K(q))(1+r_U)}{(1-2q)^2},\\
  R_N = \sqrt{1-q}, \qquad T_N = \sqrt{q}.
\end{align*}

The optimal measurement for ambiguous discrimination \cite{Hel76} is
implemented by the two beamsplitters $M$ and $N^\dagger$ and by the two
photocounters $\Pi_U$ and $\Pi_I$ surrounded by the dashed line (no
measurement is performed on output mode $1$). The conditional probabilities
$p_{X|Y}$ of detecting a photon in photodetector $\Pi_X$ given that the
unknown device is $Y$ are given by
\begin{align*}
  p_{U|U} = p_{I|I} = 1-q, \qquad p_{I|U} = p_{U|I} = q.
\end{align*}
Detecting a photon in $\Pi_U$ or $\Pi_I$ implies that the unknown beamsplitter
is $U$ or $I$, respectively, with probability of error $q$.

\subsection{Experimental setup for unambiguous quantum reading of
  beamsplitters}
\label{sect:unambset}

We provided the optimal strategy for unambiguous quantum reading of
beamsplitters in Proposition \ref{thm:discbs}. Here we describe an
experimental setup implementing such strategy, namely the unambiguous
discrimination of a beamsplitter randomly chosen from the set $\{I, U\}$ with
equal prior probabilities, with probability of failure $P_F(\rho,U)$ under a
given threshold $q$ and minimal energy flow trough the unknown
device. According to Proposition \ref{thm:discbs}, in order to have $P_F(\rho,
U) \le q$, we must have $q \geq \sqrt{R_U}$.

The experimental setup is given by
\begin{align*}
  \begin{aligned}
    \Qcircuit @C=1em @R=1em { \prepareC{0} & \ustick{1} \qw & \qw &
      \multigate{1}{I,U} & \qw & \multigate{1}{M} & \qw & \measureD{\Pi_U}\\
      \prepareC{0} & \ustick{2}
      \qw & \multigate{1}{B} & \ghost{I,U} & \multigate{1}{B^\dagger} &
      \ghost{M} & \multigate{1}{N} & \measureD{\Pi_F}\\
      \prepareC{1} & \ustick{3} \qw & \ghost{B} & \qw &
      \ghost{B^\dagger} & \qw & \ghost{N} & \measureD{\Pi_I}
    \gategroup{1}{6}{3}{8}{3mm}{--}}
  \end{aligned},
\end{align*}
where the reflectivities and transmittivities of beamsplitters $B$, $M$ and
$N$ are given by
\begin{align*}
  R_B = \frac{q-r_U}{1-r_U}, \qquad T_B = \frac{1-q}{1-r_U},\\
  R_M = \frac{(\sqrt{1+r_U} - \sqrt{q(q-r_U)})^2}{(1+q)^2},\\
  T_M = \frac{(\sqrt{q(1+r_U)}+\sqrt{q-r_U})^2}{(1+q)^2},\\
  R_N = \sqrt{1-q}, \qquad T_N =\sqrt{q}.
\end{align*}

The optimal measurement for unambiguous discrimination \cite{Sed09} is
implemented by the two beamsplitters $M$ and $N$ and by the three
photocounters $\Pi_U$, $\Pi_I$, and $\Pi_F$ surrounded by the dashed line.
The conditional probabilities $p_{X|Y}$ of detecting a photon in
photodetector $\Pi_X$ given that the unknown device is $Y$ are given by
\begin{align*}
  \begin{aligned}
    & p_{U|U} = p_{I|I} = 1-q, \quad p_{I|U} = p_{U|I} = 0, \\
& p_{F|U} =  p_{F|I} = q.
  \end{aligned}
\end{align*}
Detecting a photon in $\Pi_U$ or $\Pi_I$ implies that the unknown beamsplitter
is certainly $U$ or $I$, respectively, while detecting a photon in $\Pi_F$
declares a failure with probability $q$.

\subsection{Experimental setup for perfect quantum reading of beamsplitters}
\label{sect:perfset}

The optimal strategies for ambiguous and unambiguous quantum reading of
beamsplitters of \cite{BDD11} and Proposition \ref{thm:discbs} reduce to the
same optimal strategy for perfect quantum reading of beamsplitters when the
threshold $q$ in the probability of error (for the ambiguous case) and failure
(for the ambiguous case) is set to zero.

In this case, the condition $r_U \le 0$ given by Proposition \ref{thm:discbs}
can be satisfied upon defining $U$ as the composition of a beamsplitter $V$
with $r_V \ge 0$ with $\pm \pi/2$ phase shifters on its input and output
modes, according to
\begin{align}\label{eq:uperf}
  \begin{aligned}
    \Qcircuit @C=1em @R=1em { & \ustick{1} & \multigate{1}{U} & \qw\\ &
      \ustick{2} & \ghost{U} & \qw }
  \end{aligned}
  :=
  \begin{aligned}
    \Qcircuit @C=1em @R=1em { & \ustick{1} & \gate{-\pi/2} & \multigate{1}{V}
      & \gate{-\pi/2} & \qw\\ & \ustick{2} & \gate{\pi/2} & \ghost{V} &
      \gate{\pi/2} & \qw }
  \end{aligned},
\end{align}
and it is easy to verify that
\begin{align*}
  A_U = \left( \begin{array}{cc} e^{-i\frac{\pi}2} & 0 \\ 0 & e^{i\frac{\pi}2}
    \end{array} \right) A_V \left( \begin{array}{cc} e^{-i\frac{\pi}2} & 0
    \\ 0 & e^{i\frac{\pi}2} \end{array} \right) = \left( \begin{array}{cc}
    -r_V & -t_V \\ t_V & -r_V \end{array} \right),
\end{align*}
with $0 \le r_V,t_V \le 1$, so that $r_U \le 0$ and perfect discrimination is
indeed possible.

The experimental setup is then given by
\begin{align}\label{eq:perfsetup}
  \begin{aligned}
    \Qcircuit @C=1em @R=1em { \prepareC{0} & \ustick{1} \qw & \qw &
      \multigate{1}{I,U} & \qw & \measureD{\Pi_U}\\ \prepareC{0} & \ustick{2}
      \qw & \multigate{1}{B} & \ghost{I,U} & \multigate{1}{B^\dagger} &
      \measureD{\Pi_{U'}}\\ \prepareC{1} & \ustick{3} \qw & \ghost{B} & \qw &
      \ghost{B^\dagger} & \measureD{\Pi_I} }
  \end{aligned},
\end{align}
where the reflectivity and transmittivity of beamsplitter $B$ are given by
\begin{align*}
  R_B = \frac{r_V}{1+r_V}, \qquad T_B = \frac{1}{1+r_V}.
\end{align*}

Notice that the two $-\pi/2$ phase shifters on mode $1$ in the Equation
\eqref{eq:uperf} are irrelevant and can be discarded, since the one on the
input mode acts on the vacuum and the one on the output mode is immediately
followed by a photodetector, so we can redefine
\begin{align}\label{eq:BS}
  \begin{aligned}
    \Qcircuit @C=1em @R=1em { & \ustick{1} & \multigate{1}{U} & \qw\\ &
      \ustick{2} & \ghost{U} & \qw }
  \end{aligned}
  :=
  \begin{aligned}
    \Qcircuit @C=1em @R=1em { & \ustick{1} & \qw & \multigate{1}{V} & \qw &
      \qw\\ & \ustick{2} & \gate{\pi/2} & \ghost{V} & \gate{\pi/2} & \qw }
  \end{aligned}.
\end{align}

The optimal measurement for perfect discrimination is implemented by the three
photocounters $\Pi_U$, $\Pi_{U'}$, and $\Pi_I$. The conditional probabilities
$p_{X|Y}$ of detecting a photon in photodetector $\Pi_X$ given that the
unknown device is $Y$ are given by
\begin{align*}
  p_{U|U} = 1 - r_V, \qquad p_{U'|U} = r_V, \qquad p_{I|I} = 1,\\
  p_{I|U} = p_{U|I} = p_{U'|I} = 0
\end{align*}
Detecting a photon in $\Pi_U$ or $\Pi_{U'}$ implies that the unknown beamsplitter
is certainly $U$, while detecting a photon in $\Pi_I$ implies that the unknown
device is certainly $I$.

\section{Experimental implementation of quantum reading}
\label{sect:exper}

To demonstrate experimental feasibility of quantum reading we have built a
laboratory setup for perfect discrimination of two beam splitters according to
scheme \eqref{eq:perfsetup} [see Fig. (\ref{fig-scheme})]. It consists of a
Mach-Zehnder interferometer (MZI) with an additional beam splitter in its
upper arm. This additional beam splitter has a variable splitting ratio and it
serves as an unknown device to be discriminated.

\begin{figure}[hptb]
  \includegraphics[scale=.35]{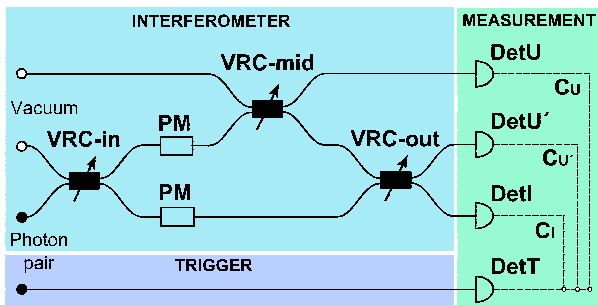}
  \caption{(Color online) Scheme of the experiment. VRC -- variable ratio
    couplers, PM -- phase modulators, Det -- detectors, C -- coincidence
    electronics.}
  \label{fig-scheme}
\end{figure}

We use a heralded single photon source based on spontaneous parametric down
conversion (SPDC). Namely, we employ a collinear frequency-degenerate SPDC
process with type-II phase matching in a 2-mm-long BBO crystal pumped by a cw
laser diode (Coherent Cube) at 405\,nm. In this process pairs of photons at
810\,nm are created. Photons from each pair are separated by a polarizing beam
splitter and coupled into single-mode optical fibers. One of them is led
directly to trigger detector DetT which heralds the creation of a pair
(Perkin-Elmer SPCM AQR-14FC, dark counts 180\,s$^{-1}$, total efficiency cca
50\%). The other one enters MZI through variable ratio coupler VRC-in.

An additional variable ratio coupler, VRC-mid, represents an unknown
device. When its reflectivity equals one it corresponds to device I. To switch
to device U one has to set required splitting ratio and apply additional phase
shift, see scheme \eqref{eq:BS}. For practical reasons we apply---without the
loss of generality---a cumulative phase shift in front of the beam
splitter. In the experiment the phase shifts are introduced by electro-optical
phase modulators (PM) manufactured by EO Space. Their half-wave voltages are
about 1.5\,V. These phase modulators exhibit relatively high
dispersion. Therefore one PM is placed in each interferometer arm in order to
compensate dispersion effects. In case of device U we use the PM in the upper
interferometer arm to apply the additional phase shift of $\pi$.

The output fibers from the unknown device and from the interferometer are led
to detectors DetU, DetU', and DetI. These detectors are parts of Perkin-Elmer
quad module SPCM-AQ4C (dark counts 370-440\,s$^{-1}$, total efficiencies about
50\%).

To reduce the effect of the phase drift caused by fluctuations of
temperature and temperature gradients we apply both passive and active
stabilization. The experimental setup is covered by a shield minimizing air flux around the
components. Besides, after each three seconds of measurement an active
stabilization is performed. It measures intensities for phase shifts 0 and
$\pi/2$ and if necessary it calculates phase compensation and applies
corrective voltage to the phase modulator in the lower interferometer
arm. This results in the precision of the phase setting during the measurement
period better than $\pi/200$.

For each pair of devices U and I the proper splitting ratio of fiber couplers
VRC-in and VRC-out must be set in order to discriminate these devices
optimally. We have made measurements for 11 different devices U with intensity
reflectances $0, 0.1, 0.2,\dots, 1$. For each pair of devices U and I the
counts at detectors DetU, DetU', and DetI were cumulated during 30
three-second measurement intervals interlaced by stabilization procedures.
All measurements were done in coincidence with the trigger detector DetT. It
means we measured coincidence counts $C_U, C_{U'}, C_I$ between detectors
DetT-DetU, DetT-DetU', and DetT-DetI, respectively, using 3\,ns coincidence
time window. These results were normalized to obtain relative frequencies,
$f_j = C_j/(C_U+C_{U'}+C_I), j=U, U', I$, which can be compared with
theoretical probabilities of detection.

Measured relative frequencies and theoretical probabilities are listed in
Tables \ref{tab:deviceI} and \ref{tab:deviceU} and shown in Figures
\ref{deviceI} and \ref{deviceU}, respectively. Table \ref{tab:deviceI} and
Fig. \ref{deviceI} show the results obtained with device I inserted, Table
\ref{tab:deviceU} and Fig. \ref{deviceU} summarize the results for devices
U. Each row in the tables corresponds to one pair of I and U with $R_v$ being
the reflectivity of device U. One can observe very good agreement between
theory and experiment. Small discrepancies appear mainly due to imperfections
in splitting-ratio settings, phase fluctuations, and polarization
misalignment. In coincidence measurements the contribution of detector noise
is completely negligible.

\begin{table}[hptb]
  \begin{center}
    \begin{tabular}{|c|ccc|ccc|}
      \hline
      $R_v$ & $p_{U|I}$ & $p_{U'|I}$ & $p_{I|I}$ & $f_{U|I}$ & $f_{U'|I}$ & $f_{I|I}$ \\
      \hline \hline
      0.0   & 0  & 0  & 1  & 0.000  & 0.002  & 0.998   \\
      0.1   & 0  & 0  & 1  & 0.000  & 0.012  & 0.988   \\
      0.2   & 0  & 0  & 1  & 0.000  & 0.018  & 0.982   \\
      0.3   & 0  & 0  & 1  & 0.000  & 0.012  & 0.988   \\
      0.4   & 0  & 0  & 1  & 0.000  & 0.023  & 0.977   \\
      0.5   & 0  & 0  & 1  & 0.000  & 0.022  & 0.978   \\
      0.6   & 0  & 0  & 1  & 0.000  & 0.014  & 0.986   \\
      0.7   & 0  & 0  & 1  & 0.000  & 0.011  & 0.989   \\
      0.8   & 0  & 0  & 1  & 0.000  & 0.013  & 0.987   \\
      0.9   & 0  & 0  & 1  & 0.000  & 0.018  & 0.982   \\
      1.0   & 0  & 0  & 1  & 0.000  & 0.021  & 0.979   \\
      \hline
    \end{tabular}
    \caption{\label{tab:deviceI} Results for device I. $R_v$ -- reflectivities
      of devices U; $p_{U|I}, p_{U'|I}, p_{I|I}$ -- theoretical probabilities
      of photon detection at detectors DetU, DetU', DetI, respectively,
      $f_{U|I}, f_{U'|I}, f_{I|I}$ -- relative frequencies measured at
      detectors DetU, DetU', DetI, respectively (measured in coincidence with
      DetT).}
  \end{center}
\end{table}

\begin{figure}[hptb]
  \includegraphics[scale=.4]{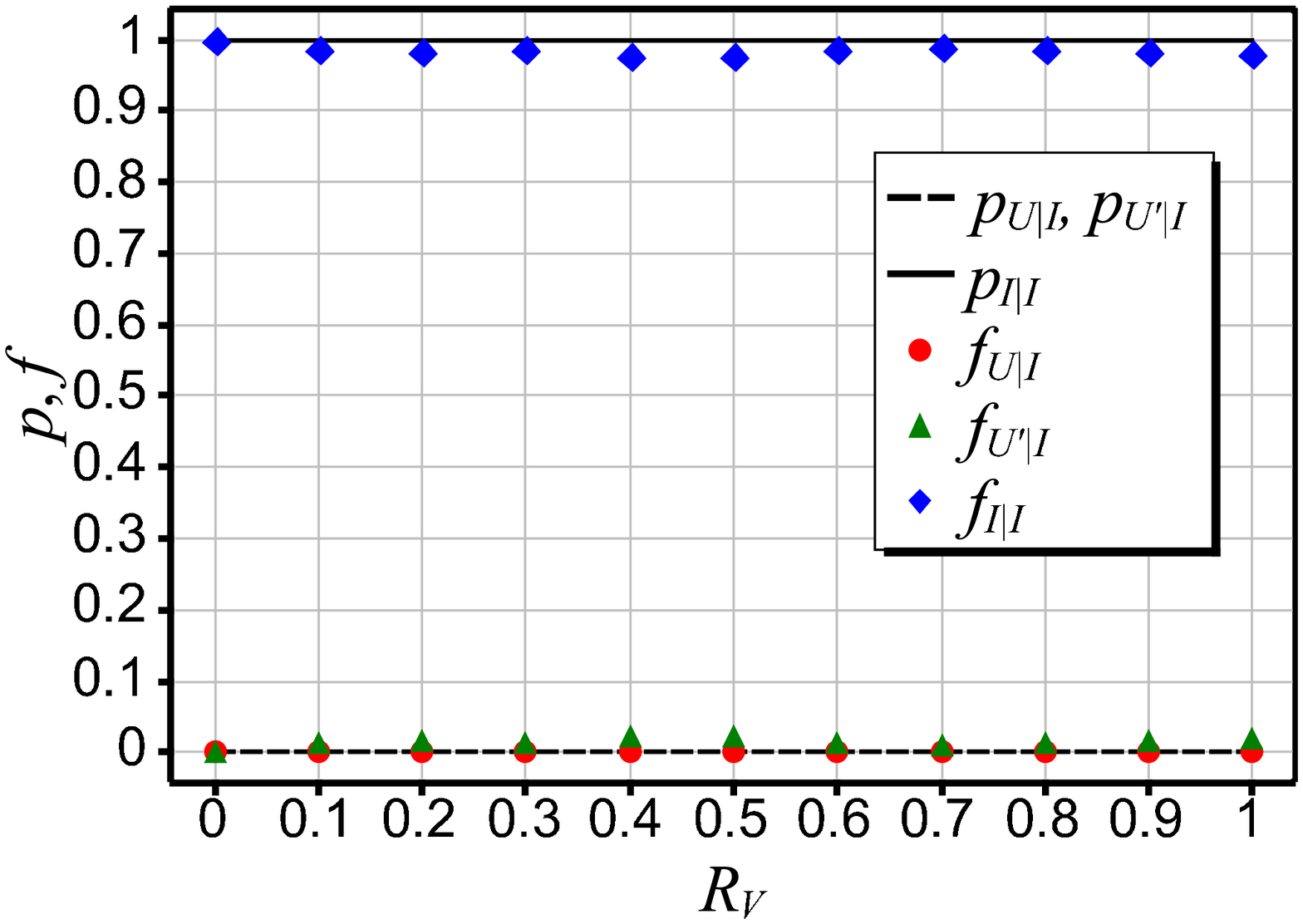}
  \caption{(Color online) Results for device I: Detection probabilities and
    measured relative frequencies as functions of reflectivity $R_v$. Different
    reflectivities correspond to different devices U (VRC-mid).}
  \label{deviceI}
\end{figure}

\begin{table}[hptb]
  \begin{center}
    \begin{tabular}{|c|ccc|ccc|}
      \hline
      $R_v$ & $p_{U|U}$ & $p_{U'|U}$ & $p_{I|U}$ & $f_{U|U}$ & $f_{U'|U}$ & $f_{I|U}$\\
      \hline \hline
      0.0  &  1.000  &  0.000      &   0   & 0.986  &  0.000   & 0.014 \\
      0.1  &  0.684  &  0.316      &   0   & 0.680  &  0.295   & 0.025 \\
      0.2  &  0.553  &  0.447      &   0   & 0.551  &  0.440   & 0.009 \\
      0.3  &  0.452  &  0.548      &   0   & 0.455  &  0.542   & 0.003 \\
      0.4  &  0.368  &  0.633      &   0   & 0.369  &  0.623   & 0.008 \\
      0.5  &  0.293  &  0.707      &   0   & 0.288  &  0.691   & 0.021 \\
      0.6  &  0.225  &  0.775      &   0   & 0.219  &  0.758   & 0.022 \\
      0.7  &  0.163  &  0.837      &   0   & 0.160  &  0.830   & 0.010 \\
      0.8  &  0.106  &  0.894      &   0   & 0.100  &  0.891   & 0.009 \\
      0.9  &  0.051  &  0.949      &   0   & 0.046  &  0.946   & 0.007 \\
      1.0  &  0.000  &  1.000      &   0   & 0.000  &  0.980   & 0.020 \\
      \hline
    \end{tabular}
    \caption{\label{tab:deviceU} Results for devices U. $R_v$ -- reflectivity
      of device U; $p_{U|U}, p_{U'|U}, p_{I|U}$ -- theoretical probabilities
      of photon detection at detectors DetU, DetU', DetI, respectively,
      $f_{U|U}, f_{U'|U}, f_{I|U}$ -- relative frequencies measured at
      detectors DetU, DetU', DetI, respectively (measured in coincidence with
      DetT).}
  \end{center}
\end{table}

\begin{figure}[hptb]
  \includegraphics[scale=0.4]{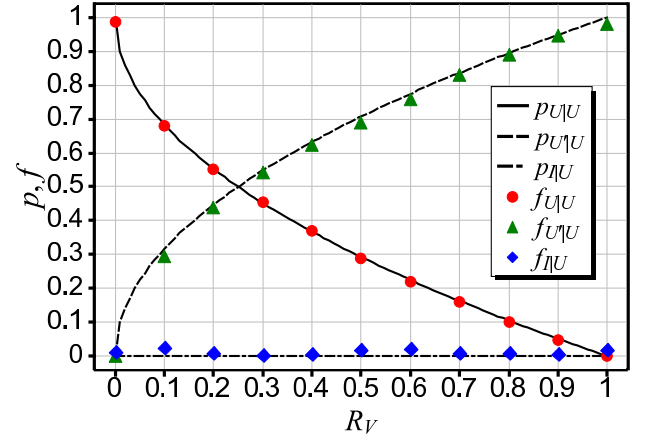}
  \caption{(Color online) Results for devices U: Detection probabilities and
    measured relative frequencies as functions of reflectivity $R_v$. Different
    reflectivities correspond to different devices U (VRC-mid).}
  \label{deviceU}
\end{figure}

\section{Conclusion}
\label{sect:concl}

In this paper we considered unambiguous quantum reading of optical memories,
on the assumption that noise and loss are negligible. In Section
\ref{sect:theory} we showed that the optimal strategy for the unambiguous
discrimination of optical devices can be derived by extending the results
proved for the ambiguous case.

In Section \ref{sect:setup} we presented some experimental implementation of
quantum reading for both the ambiguous and the unambiguous case. In the
proposed setups the input state is fixed to be a single photon state. By
making use of an ancillary mode it was possible to tune the amount of energy
flowing through the device.

Finally, in Section \ref{sect:exper} we provide experimental results for the
perfect quantum reading. The advantage of the implemented setup is that in
ideal case there is exactly one photon at the output ports. It makes detection
relatively easy. Nevertheless, it is still a superposition of a single photon
and vacuum what is entering the unknown device. So the unknown device is
exposed just to a fraction of energy of a single photon in average. Even if
the overall probability of success of the setup is relatively low because of
technological losses, we were able to measure precisely the relative
probabilities of all outputs and our experiment convincingly validate the
predictions of the exposed theory.

\section*{Acknowledgments}

We thank Michal Sedl\'ak for very useful suggestions, Helena Fikerov\'{a} for
her help with software for active stabilization of the interferometer, and Ivo
Straka for the construction of the two-photon source. This work was supported
by the Italian Ministry of Education through PRIN 2008 and the European
Community through the COQUIT project. M. Dall'Arno was founded by the Spanish project
FIS2010-14830. Experimental part was supported by the Czech Science Foundation
(202/09/0747), Palacky University (PrF-2011-015), and the Czech Ministry of
Education (MSM6198959213, LC06007).


\begin{thebibliography}{}
\bibitem{Pir11} S. Pirandola, Phys. Rev. Lett. {\bf 106}, 090504 (2011).
\bibitem{BDD11} A. Bisio, M. Dall'Arno, and G. M. D'Ariano, Phys. Rev. A 84,
  012310 (2011).
\bibitem{Nai11} R. Nair, Phys. Rev. A {\bf 84}, 032312 (2011).
\bibitem{PLGMB11} S. Pirandola, C. Lupo, V. Giovannetti, S. Mancini, and
  S. L. Braunstein, arXiv:quant-ph/1107.3500.
\bibitem{Hir11} O. Hirota, arXiv:quant-ph/1108.4163v2.
\bibitem{note:illumination} This fact
  is no longer true for example in the analogous context of quantum
  illumination \cite{Llo08, TEGGLMPS08, SL09}, where one's task is to perform
  a low energy detection of the presence (or absence) of a far object in a
  noisy environment.
\bibitem{Llo08} S. Lloyd, Science {\bf 321}, 1463 (2008)
\bibitem{TEGGLMPS08} S. H. Tan, B. I. Erkmen, V. Giovannetti, S. Guha,
  S. Lloyd, L. Maccone, S. Pirandola, J. H. Shapiro, Phys. Rev. Lett. {\bf
    101}, 253601 (2008).
\bibitem{SL09} J. H. Shapiro and S. Lloyd, New J. Phys. {\bf 11} 063045
  (2009).
\bibitem{Aci01} A. Acin, Phys. Rev. Lett 87, 177901 (2001).
\bibitem{DLP02} G. M. D' Ariano, P. Lo Presti, and M. G. A. Paris, J. Opt. B
  {\bf 4}, 273 (2002).
\bibitem{DFY07} R. Duan, Y. Feng, M. Ying, Phys. Rew. Lett. {\bf 98}, 100503
  (2007).
\bibitem{Sed09} M. Sedl\'ak, Act. Phys. Slov. {\bf 59}, 653 (2009).
\bibitem{Sac05} M. F. Sacchi, Phys. Rev. A {\bf 72}, 014305 (2005).
\bibitem{ZPWRLHG08} P. Zhang, L. Peng, Z.-W. Wang, X.-F. Ren, B.-H. Liu,
  Y.-F. Huang, and G.-C. Guo, J. Phys. B {\bf 41}, 195501 (2008).
\bibitem{LRO09} A. Laing, T. Rudolph, and J. L. O'Brien, Phys. Rev. Lett. {\bf
  102}, 160502 (2009).
\bibitem{Leo03} U. Leonhardt, Rept. Prog. Phys. {\bf 66}, 1207 (2003).
\bibitem{Hel76} C. W. Helstrom, {\em Quantum Detection and Estimation Theory}
  (Academic Press, New York, 1976).
\end{thebibliography}
\end{document}